\newtheorem{rem}{Remark}
\DeclareMathOperator*{\argmax}{arg\,max}
\newtheorem{theorem}{Theorem}
\newtheorem{prop}{Proposition}
\newtheorem{assumption}{Assumption}
\newtheorem{corollary}{Corollary}
\newtheorem{problem}{Problem}
\def\BibTeX{{\rm B\kern-.05em{\sc i\kern-.025em b}\kern-.08em
    T\kern-.1667em\lower.7ex\hbox{E}\kern-.125emX}}
\begin{document}
\title{The Immersion and Invariance Wind Speed Estimator Revisited and New Results}
\author{
Yichao Liu, Atindriyo Kusumo Pamososuryo, Riccardo M.G. Ferrari and Jan-Willem van Wingerden
\thanks{This research was supported in part by the Netherlands Organization for Scientific Research (NWO) via a VIDI grant 17512, and in part by the European Union via a Marie Sklodowska-Curie Action (Project EDOWE, grant 835901).}
\thanks{
Yichao Liu, Atindriyo Kusumo Pamososuryo, Riccardo M.G. Ferrari and Jan-Willem van Wingerden are with the Delft Center for Systems and Control, Delft University of Technology, Mekelweg 2, 2628 CD Delft, the Netherlands (e-mail: {\{Y.Liu-17, A.K.Pamososuryo, R.Ferrari, J.W.vanWingerden\}@tudelft.nl}).}}
\pagestyle{empty} 
\maketitle
\thispagestyle{empty} 

\begin{abstract}
The Immersion and Invariance (I\&I) wind speed estimator is a powerful and widely-used technique to estimate the rotor effective wind speed on horizontal axis wind turbines.
Anyway, its global convergence proof is rather cumbersome, which hinders the extension of the method and proof to time-delayed and/or uncertain systems. 
In this letter, we illustrate that the circle criterion can be used as an alternative method to prove the global convergence of the I\&I estimator. This also opens up the inclusion of time-delays and uncertainties. 
First, we demonstrate that the I\&I wind speed estimator is equivalent to a torque balance estimator with a proportional correction term.
As the nonlinearity in the estimator is sector bounded,
the well-known circle criterion is applied to the estimator to guarantee its global convergence for time-delayed systems.
By looking at the theoretical framework from this new perspective, this letter further proposes the addition of an integrator to the correction term to improve the estimator performance.
Case studies show that the proposed estimator with an additional integral correction term is effective at wind speed estimation. 
Furthermore, its global convergence can be guaranteed by the circle criterion for time-delayed systems.



\end{abstract}

\begin{IEEEkeywords}
Wind speed estimator, circle criterion, wind turbine, time-delayed system, global convergence
\end{IEEEkeywords}

\section{Introduction}
\label{sec:introduction}
\IEEEPARstart{W}IND energy has received increasingly considerable attention in the international energy markets in recent years. 
More than 60\,GW new wind power was installed in 2019, which demonstrates a 19\,\% growth compared to 2018 for the global wind industry~\cite{GWEC_2020}. 
In this context, the sizes of wind turbines are increased, which results in a rising demand for optimization of wind turbine controllers in the world.

In designing a wind turbine controller or wind farm controller, knowledge of wind speed over the rotor is widely used to improve the control performance, \emph{e.g.}, gain scheduling and feedback techniques~\cite{Ostergaard_2007,DOEKEMEIJER2020719}. 
Unfortunately, the detailed information on the effective wind speed over the entire rotor disk is still limited~\cite{Liu2021}. 
The measured wind speed from an anemometer deployed on the turbine's nacelle or spinner is not precise, as the device can only provide pointwise information.
The wind conditions, however, demonstrate high spatial variability over the rotor disk of a single wind turbine.

To address this issue, a number of rotor effective wind speed estimators~\cite{Soltani2013} have been proposed in the past years, among which the torque balance estimator class~\cite{Ostergaard_2007, Ma_1995, Ortega_2013} appears as one of the simplest and widely-used wind speed estimation solutions.
The basic idea is that the measured generator power or torque signals, together with the measured rotor speed, can be utilized to estimate the aerodynamic power or torque based on the turbine's power coefficient, and thereby the effective wind speeds. 
One of the most attractive torque balance methods is the so-called Immersion and Invariance (I\&I) technique, introduced in Ortega \textit{et al.}~\cite{Ortega_2013}.
The reasons for this is due to 1) its ease of tuning and guaranteed convergence; 2) its ability to take nonlinearity into account without linearization; and 3) being validated in the field~\cite{Soltani2013}.

The torque balance estimator can be seen as a type of Lur'e system~\cite{Khalil_2015} formed by a negative feedback interconnection of a linear estimator and a bounded nonlinearity on the turbine power coefficient.
In this context, a common restriction for such Lur'e type estimators is the lack of sufficient asymptotic stability conditions on the linear stable estimator such that the feedback interconnection is stable. 
Another challenge stems from the fact that these estimation approaches are usually implemented in discrete time, where time-delays may induce closed-loop instabilities~\cite{Lee_1981}.
In the work of Ortega \textit{et al.}~\cite{Ortega_2013}, the proof of global convergence of such a torque balance estimator was provided.
However, the derivation of such a proof is rather cumbersome and, moreover, the presence of time-delays did not receive enough attention.
This cannot be neglected as the wind speed estimator is usually implemented as a digital system, where the lack of available asymptotic stability conditions will limit its application to real wind turbines.

The main contribution of this letter is fourfold.
First, we show that the I\&I wind speed estimator is equivalent to a torque balance estimator with a proportional correction term.
Second, we present an alternative proof, with exactly the same assumptions and conditions, for the I\&I wind speed estimator based on the well-known circle criterion~\cite{Khalil_2015}. 
Third, we extend the proof by including time-delays in the proof.
Fourth, we will show that the performance of the estimator can be improved by adding an integrator to the correction term.
The proposed estimator is finally verified in several case studies.

The remainder of this letter is structured as follows:
Section~\ref{sec:2} describes the fundamental wind turbine dynamics that will be incorporated in the wind speed estimator.
Section~\ref{sec:3} formulates the estimation problem, followed by its solutions in~\ref{sec:4}, alternative proof and extensions.
In Section~\ref{sec:5}, case studies are performed to illustrate the performance of the estimator and its global convergence properties for different stability conditions in the presence of time-delays. 
Finally, concluding remarks are presented in Section~\ref{sec:6}.

\section{Definition of a wind turbine model}\label{sec:2}
In this section the basic dynamics of a wind turbine is presented, which will be incorporated in the estimator. 
The aerodynamic power captured from the wind is given by:
\begin{equation}
P_{w} = \frac{1}{2}\rho A U^3 C_{p}(\lambda)
\, ,
\label{eq:P_w}
\end{equation} 
with $\rho$, $A$, $U$, and $C_p(\cdot)$, as the air density, rotor swept area, rotor effective wind speed, and power coefficient, respectively.
Here, $C_p(\cdot)$ is a nonlinear function of the tip speed ratio defined as:
\begin{equation}
\lambda := \frac{\omega_rR}{U}
\, ,
\label{eq:lambda}
\end{equation}
where $\omega_r$ and $R$ denote the rotor speed and rotor radius, respectively.
Note that $C_p(\cdot)$ is typically presented also as a function of blade pitch angle in the literature, but, without loss of generality, we consider constant pitch angle throughout this work.
It is also worth mentioning that the shape of the $C_p(\cdot)$ curve relies on the design of the turbine and can be obtained either from numerical simulations or experimental data.

As an example, for the National Renewable Energy Laboratory (NREL) 5 MW wind turbine model~\cite{Jonkman_2009}, the $C_p(\cdot)$ curve covering the operating region of interest is illustrated in Fig.~\ref{Cp table}.
It is evident that for $\lambda \in [\lambda_{\text{min}}, \lambda_{\text{max}}]$, in which $\lambda_{\text{min}}>0$ and $\lambda_{\text{max}}>0$, $C_p$ satisfies:

\begin{figure}
\centering 
\includegraphics[width=0.8\columnwidth]{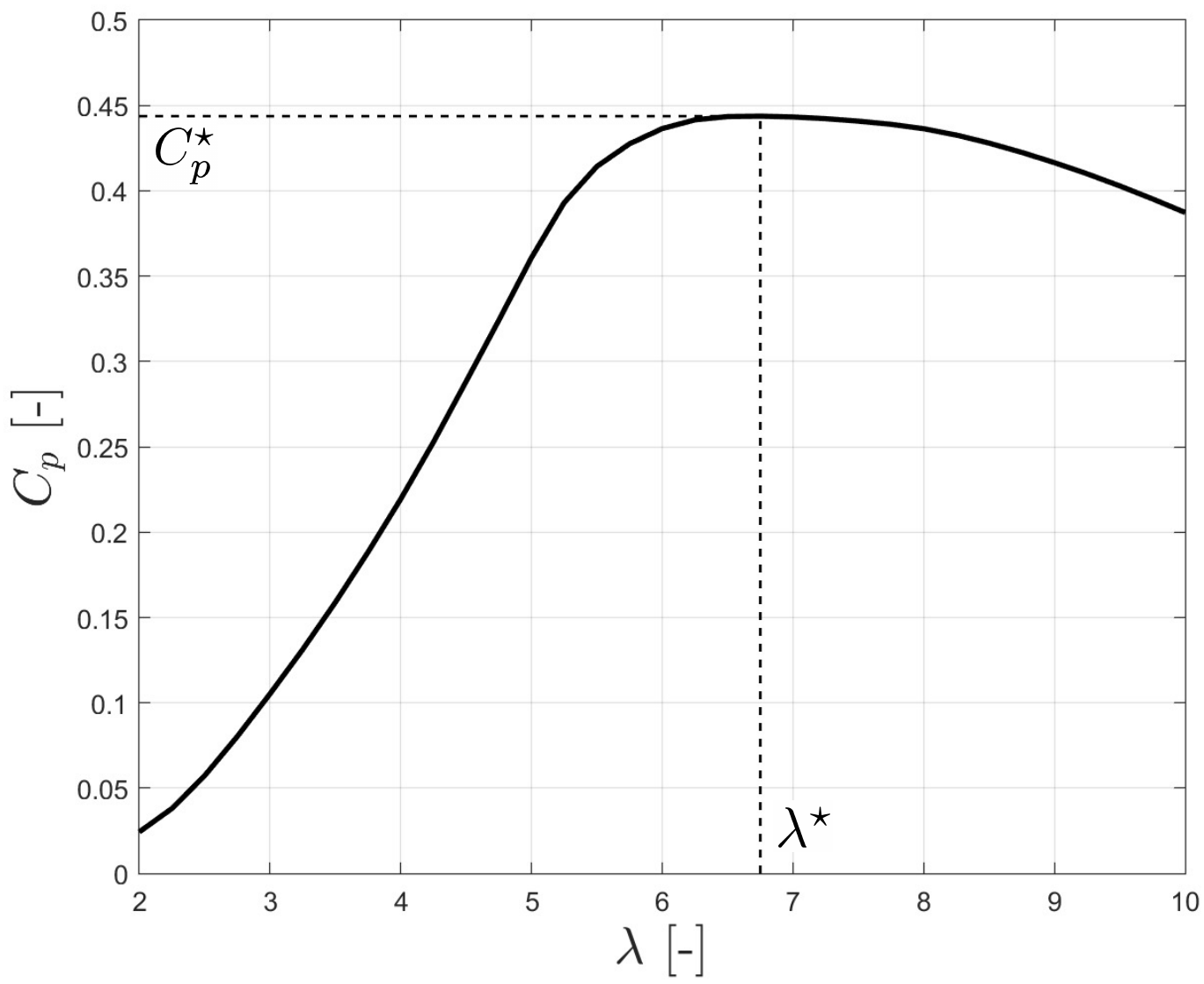}
\caption{Power coefficient for the 5MW wind turbine model~\cite{Jonkman_2009} covering the operating region of interest.}
\label{Cp table} %
\end{figure}

\begin{equation}
C_p(\lambda)>0
\, ,
\label{eq:C_p}
\end{equation}
and there exists a constant $\lambda^\star$ which corresponds to the maximum power generation point at $C_p^\star$.
In other words, $\lambda^\star$ is defined as:
\begin{equation}
\lambda^\star := \underset{\lambda}{\argmax{C_p(\lambda)}}
\, .
\label{eq:lambda_star}
\end{equation}
The dynamics of the wind turbine generator are given by: 
\begin{equation}
J\dot{\omega}_g = T_r/N - T_g
\, ,
\label{eq:Jwg}
\end{equation}
where $J$ is a known parameter describing the equivalent inertia at the generator shaft obtained from the relation $J=J_g+J_r/N^2$.
The symbols $J_g$ and $J_r$ are the inertia's of the generator and rotor while $N=\omega_g/\omega_r$ represents the gear ratio of the transmission with $\omega_g$ as the generator speed.
The symbols $T_g$ and $T_r$ denote the generator and aerodynamic torque, in which the latter can be derived from \eqref{eq:P_w} as follows:
\begin{equation}
T_r = \frac{P_w}{\omega_r} = \frac{\rho A}{2}\frac{U^3}{\omega_r} C_p(\lambda)
\, ,
\label{eq:T_r}
\end{equation}
and leads to the following definition of the nonlinearity:
\begin{equation}
\Phi(\omega_r, U) := \frac{T_r}{NJ} = \frac{\rho A}{2NJ}\frac{U^3}{\omega_r} C_p(\lambda)
\, .
\label{eq:Phi}
\end{equation}

In this letter, the wind turbine satisfies the following assumption, under which wind speed estimation is possible:

\begin{assumption}\label{ass:rotor_lower_bound}
The rotor speed is positive and lower bounded. That is, for all $t\ge0$, there exists $\omega_r^{\text{min}}>0$, such that 
\begin{equation}
\omega_r(t) \ge  \omega_r^{\text{min}}
\, .
\label{eq:assumption1}
\end{equation}
\end{assumption}

For wind speed estimation, the well-known torque balance estimator usually employs a negative feedback interconnection between a linear estimator and a nonlinear function, namely the power coefficient. 
A general structure is presented in Fig.~\ref{general torque balance}, where $\hat{\omega}_r$ and $\epsilon$ are the estimated rotor speed and the error between estimated and measured rotor speed, respectively.
The linear estimator, which can be designed by using different estimation methods~\cite{Soltani2013}, aims to drive the error to zero, and thus obtain an optimal estimate $\hat{U}$ of $U$.

Ortega \textit{et al.} \cite{Ortega_2013} proposed an I\&I torque balance estimator, which is defined as
\begin{equation}
\begin{cases}
\label{eq:IandI}
\dot{\hat{U}}^I = \gamma \left[\frac{T_g}{NJ}-\frac{1}{N}\Phi\left(\omega_r, \hat{U}^I+\gamma\omega_r\right)\right] \\

\hat{U} = \hat{U}^I + \gamma\omega_r
\end{cases} \, ,
\end{equation}
where $\gamma>0$ is the proportional gain.

\begin{rem}
The gear ratio of the transmission $N$ is omitted in~\cite{Ortega_2013}. 
Without loss of generality, $N$ is considered in this letter.
\end{rem}

\begin{figure}
\centering 
\includegraphics[width=0.9\columnwidth]{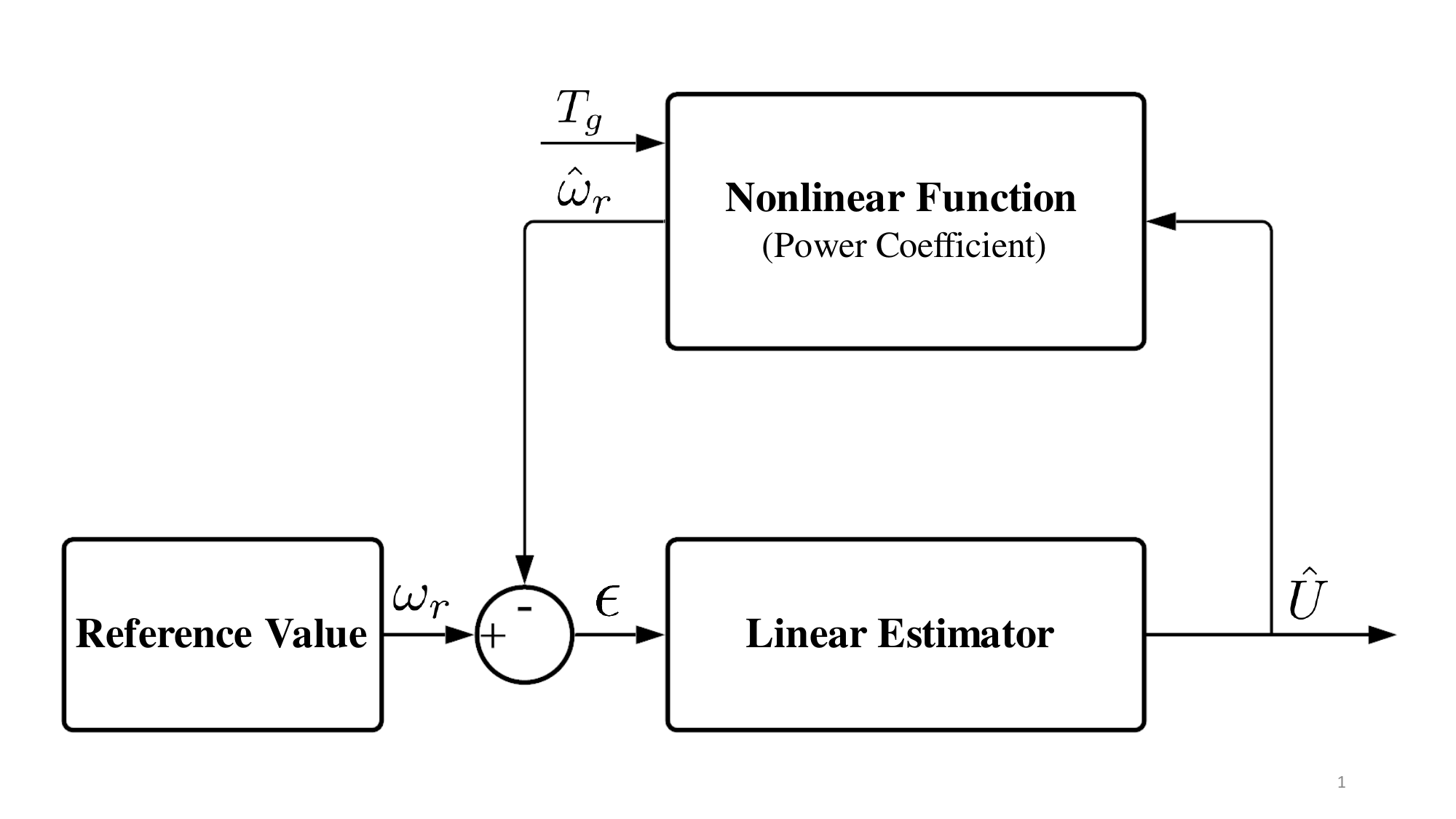}
\caption{Block diagram of a general torque balance estimator. It is formed by a negative feedback interconnection between the linear estimator and the power coefficient.}
\label{general torque balance} %
\end{figure}

\section{Problem statement}\label{sec:3}
In this section, the following assumptions are made to formulate the wind speed estimation problem. 

\begin{assumption}\label{ass:cp_smooth}
$C_p: [\lambda_{\text{min}}, \lambda_{\text{max}}]\to \mathbb{R}_{+}$ is a \textit{known} and smooth nonlinear function of class $\mathbb{C}^n$, that is its $0$--th through $n$--th derivatives are continuous, where $n$ is a non-negative integer. 
It satisfies \eqref{eq:C_p} and 

\begin{equation}
C_p'=
\begin{cases}
\label{eq:assumption2}
>0\,\,\,\,\,\, \text{for}\,\,\,\,\,\, \lambda \in [\lambda_{\text{min}}, \lambda^\star) \\
=0\,\,\,\,\,\, \text{for}\,\,\,\,\,\, \lambda = \lambda^\star \\
<0\,\,\,\,\,\,  \text{for}\,\,\,\,\,\, \lambda \in (\lambda^\star, \lambda_{\text{max}}]
\end{cases} \, ,
\end{equation}
in which $(\cdot)'$ represents differentiation.
\end{assumption}

\begin{assumption} \label{ass:constant_U}
The rotor effective wind speed $U$ is an \textit{unknown} positive constant.
\end{assumption}

\begin{rem}
Assumption~\ref{ass:constant_U} follows Ortega \textit{et al.}'s paper~\cite{Ortega_2013}. 
In practice, such a constant wind speed assumption can still approximately hold if the low amplitude oscillation noise in the measured signals can be filtered out, and the slowly-varying signals of the wind turbine, \emph{e.g.}, $\omega_r$, are successfully tracked~\cite{Fernando_2012}.
\end{rem}

\begin{assumption}\label{ass:measurable_Tg_wr}
$T_g$ and $\omega_r$ are the measured signals, and $\omega_r$:~$(0, \infty) \to (0, \infty)$ satisfies 
\begin{equation}
 \lim_{x\to\infty} \frac{\omega_r(bx)}{\omega_r(x)}=1
\, ,
\label{eq:assumption4}
\end{equation}
for all $b>0$.
\end{assumption}

The wind speed estimation problem addressed in this letter is therefore as follows:



\begin{problem}\label{prob:alter_proof}
For the given wind turbine system in \eqref{eq:P_w}, \eqref{eq:Jwg} and its nonlinearity \eqref{eq:Phi}, find an estimator that is capable of providing an asymptotically consistent estimate of $\hat{U}$ and $\hat{\omega}_g$ which is also robust to time-delays. 
That is, an estimate of $U$ and $\omega_g $ such that:

\begin{equation}
\lim_{t\to\infty} \hat{U}(t) = U, \hspace{4mm} \lim_{t\to\infty} \hat{\omega}_g(t) = \omega_g
\, .
\label{eq:estimate}
\end{equation}
\end{problem}

\section{Main results}\label{sec:4}
To derive the main results, the circle criterion~\cite{Khalil_2015, Karl_2020} is recalled here as follows. 
\begin{theorem}[Circle criterion]\label{thrm:circle_criterion}
Consider a negative feedback system consisting of a linear system $G(s)$ and a static sector-bounded nonlinearity $\Phi(x)$, satisfying

\begin{equation}
k_1 x\le \Phi(x)\le k_2 x
\, .
\label{eq:sector bound}
\end{equation}
The closed-loop interconnection is stable if the Nyquist curve of $G(s)$ does not enter a circle in the complex plane, with a radius of $\frac{k_2-k_1}{2k_1k_2}$ and center at $\frac{-k_2-k_1}{2k_1k_2}$, and the encirclement condition of the general Nyquist criterion is satisfied. 
\end{theorem}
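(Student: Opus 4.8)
The plan is to recast the graphical disk condition as a frequency-domain passivity (strict positive real) certificate and then invoke a Lyapunov/passivity argument for the feedback loop; this is the classical route behind the circle criterion (see \cite{Khalil_2015, Karl_2020}). First I would perform a loop transformation to normalize the sector. Writing $\psi(x) := \Phi(x) - k_1 x$, the sector bound \eqref{eq:sector bound} forces $\psi$ into the shifted sector $[0,\,k_2-k_1]$, i.e.\ $0 \le \psi(x) \le (k_2-k_1)x$ for $x>0$ with the mirrored inequality for $x<0$. Pulling the linear feedforward $k_1 x$ into the plant replaces $G(s)$ by $G_1(s) = G(s)/(1+k_1 G(s))$, and the original negative-feedback interconnection is internally stable if and only if the transformed loop---now a sector-$[0,\,k_2-k_1]$ nonlinearity around $G_1$---is stable. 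The role of the encirclement hypothesis is precisely to certify that this algebraic manipulation neither creates nor hides unstable modes, so that $G$ and $G_1$ share the same closed-loop pole count via the general Nyquist criterion.

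Second, I would establish stability of the transformed loop through passivity. The corresponding frequency-domain requirement is that
\begin{equation}
\operatorname{Re}\!\left[\frac{1+k_2\,G(j\omega)}{1+k_1\,G(j\omega)}\right] > 0
\qquad \text{for all } \omega ,
\label{eq:spr}
\end{equation}
i.e.\ the transformed transfer function is strictly positive real. By the Kalman--Yakubovich--Popov lemma, \eqref{eq:spr} yields a storage function $V(x)=x^\top P x$ with $P=P^\top \succ 0$ that, combined with the passivity-type inequality satisfied by the normalized nonlinearity, serves as a Lyapunov function whose derivative along the closed loop is negative; absolute (hence global) stability of the interconnection then follows.

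Third, I would close the loop between the analytic condition \eqref{eq:spr} and the stated geometry. For $0<k_1<k_2$ the condition \eqref{eq:spr} is algebraically equivalent, through the bilinear map $G_1 = G/(1+k_1 G)$, to the Nyquist locus of $G(j\omega)$ staying strictly outside the closed disk whose diameter is the real-axis segment $[-1/k_1,\,-1/k_2]$. This is a direct computation, but it must be carried out with attention to the sign of $k_1$: the bounded-disk picture stated in the theorem corresponds to $k_1>0$, whereas $k_1=0$ and $k_1<0$ degenerate into a half-plane condition and an exterior-of-disk condition, respectively.

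I expect the main obstacle to be the passivity/KYP step rather than the geometry. One has to guarantee strictness of the positive-real inequality uniformly in $\omega$ (including as $\omega\to\infty$ and near any imaginary-axis poles of $G$) so that the KYP lemma delivers a genuine Lyapunov function, and one must verify that the storage function obtained for the transformed loop pulls back to a valid certificate for the original interconnection after undoing the loop transformation. Once these two technical points are secured, combining the Lyapunov decrease with the Nyquist encirclement count completes the proof.
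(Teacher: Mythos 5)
Your outline is correct and follows exactly the classical loop-transformation/strict-positive-real/KYP route that the paper itself relies on: the paper omits the proof entirely and simply cites the standard nonlinear-systems references \cite{Khalil_2015, Karl_2020}, where this is precisely the argument given. Your added care about the sign of $k_1$ (the bounded-disk picture requiring $k_1>0$, consistent with Proposition~3's assumption $0<k_1<k_2$) and about strictness of the positive-real inequality at $\omega\to\infty$ are the right technical points to watch, and nothing in your sketch conflicts with the statement as used in the paper.
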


\begin{proof}
The proof with respect to the circle criterion can be found in nonlinear system literature, \emph{e.g.},~\cite{Khalil_2015, Karl_2020}. 
Thus, it is omitted in this letter.
\end{proof}

\begin{prop}\label{prop:IandI_integral}
Consider the wind turbine system in \eqref{eq:P_w}, \eqref{eq:Jwg} and its nonlinearity \eqref{eq:Phi}, the following form is an equivalence of~\eqref{eq:IandI}
\begin{equation}
\begin{cases}
\label{eq:integral}
\dot{\hat{\omega}}_r=\frac{1}{N}\Phi\left(\omega_r,  \hat{U}\right)-\frac{T_g}{NJ}\\

\hat{U} = \gamma (\omega_r-\hat{\omega}_r)
\end{cases} \, .
\end{equation}
\end{prop}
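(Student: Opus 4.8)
The plan is to prove the equivalence by exhibiting an invertible change of the internal state variable that carries trajectories of one realisation onto trajectories of the other. Both systems are driven by the same measured signals $\omega_r$ and $T_g$ and produce an estimate $\hat{U}$, so it suffices to show that their input--output behaviours coincide once the states are correctly identified.

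First I would note that the two realisations differ only in the choice of internal state: \eqref{eq:IandI} carries $\hat{U}^I$, whereas \eqref{eq:integral} carries $\hat{\omega}_r$. Comparing the two output definitions, $\hat{U} = \hat{U}^I + \gamma\omega_r$ in one and $\hat{U} = \gamma(\omega_r - \hat{\omega}_r)$ in the other, immediately suggests the linear substitution $\hat{U}^I = -\gamma\hat{\omega}_r$, equivalently $\hat{\omega}_r = -\hat{U}^I/\gamma$. Since $\gamma>0$, this map is a bijection between the two state spaces, so any equivalence established through it is genuinely two-way rather than one-directional. I would then verify that this substitution renders the output equations identical: inserting $\hat{\omega}_r = -\hat{U}^I/\gamma$ into $\hat{U} = \gamma(\omega_r - \hat{\omega}_r)$ returns precisely $\hat{U} = \hat{U}^I + \gamma\omega_r$, so the two prescriptions for $\hat{U}$ agree.

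The final step is to match the state dynamics. Differentiating the relation $\hat{U}^I = -\gamma\hat{\omega}_r$ gives $\dot{\hat{U}}^I = -\gamma\dot{\hat{\omega}}_r$; substituting the state equation of \eqref{eq:integral} for $\dot{\hat{\omega}}_r$ yields $\dot{\hat{U}}^I = \gamma\bigl[\tfrac{T_g}{NJ} - \tfrac{1}{N}\Phi(\omega_r,\hat{U})\bigr]$, and then replacing $\hat{U}$ inside the argument of $\Phi$ by the common output relation $\hat{U} = \hat{U}^I + \gamma\omega_r$ recovers exactly the state equation of \eqref{eq:IandI}. Because the substitution is invertible, the converse direction follows by the same computation read backwards.

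I expect no genuine obstacle here: once the change of variables is in hand, the argument is a routine algebraic verification. The only substantive step is recognising the correct state transformation, and this is dictated transparently by the structure of the two output equations, so identifying $\hat{U}^I = -\gamma\hat{\omega}_r$ is the crux of the proof.
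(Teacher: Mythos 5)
Your proposal is correct and follows essentially the same route as the paper: the paper's own (very terse) proof amounts to the same substitution argument, and your explicit change of state variable $\hat{U}^I=-\gamma\hat{\omega}_r$ is exactly what makes that substitution work, since it maps the output equation $\hat{U}=\gamma(\omega_r-\hat{\omega}_r)$ onto $\hat{U}=\hat{U}^I+\gamma\omega_r$ and the dynamics $\dot{\hat{U}}^I=-\gamma\dot{\hat{\omega}}_r$ onto the I\&I state equation. If anything, your version is more complete than the paper's one-line proof, because it makes the invertibility of the state map (and hence the two-way nature of the equivalence, given matched initial conditions) explicit.
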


\begin{proof}
The first equality can be readily derived from~\eqref{eq:Jwg} and~\eqref{eq:Phi}. 
Substituting both into~\eqref{eq:IandI} yields~\eqref{eq:integral}. 
\end{proof}

The theoretical framework~\eqref{eq:integral} is very useful as the I\&I estimator~\eqref{eq:IandI} is shown to be equivalent to a torque balance estimator with a proportional correction term.
This will significantly simplify the global convergence proof of the estimator.
\begin{corollary}
Let us consider the wind speed estimator in~\eqref{eq:integral} and define $\epsilon = \omega_r - \hat{\omega}_r$. 
Then the estimation error $\epsilon \rightarrow \epsilon_f \neq 0$ where $\epsilon_f$ is the estimator steady state error.
\end{corollary}
From~\eqref{eq:integral}, we observe that if $\epsilon$ is zero, $\hat{U}$ would be zero as well, which causes an improper estimate of $U$ and a non-physical value for $\Phi$.
Based on this consideration, a Proportional Integral (PI) correction term is proposed in this letter, which results in an extension of~\eqref{eq:integral} as
\begin{equation}
\begin{cases}
\label{eq:proportional+integral}
\dot{\hat{\omega}}_r=\frac{1}{N}\Phi\left(\omega_r,  \hat{U}\right)-\frac{T_g}{NJ}\\

\epsilon = \omega_r-\hat{\omega}_r \\
\hat{U} = \gamma \epsilon + \beta \int_0^t \epsilon(\tau) d\tau
\end{cases} \, ,
\end{equation}
where $\beta$ is the integral gain, $t$ the present time, and $\tau$ the variable of integration.
By adding the integrator, the extended formula~\eqref{eq:proportional+integral} would be able to equivalently make the error converge to zero, while provide estimates of not only wind speed~$\hat{U}$, but also the rotor speed~$\hat{\omega}_r$.

The overall structure of the proposed estimator with a PI correction term is presented in a block diagram with transfer functions, as shown in Fig.~\ref{integral block}.
The time-delay given by $e^{-sT}$ where $T$ is the sampling period, is explicitly considered.
Thus, the linear part of the proposed wind speed estimator with a PI correction term, denoted~$G(s)$ in Fig.~\ref{integral block}, can be rewritten as
\begin{equation}
G(s) = \frac{\gamma s+\beta}{s^2}e^{-sT}
\, .
\label{eq:G1}
\end{equation}
In addition, Fig.~\ref{integral block} actually shows a normalized form of the proposed estimator, with the aid of the scaling factor $\alpha$. The value selected for $\alpha$ will be elaborated below.

\begin{figure}
\centering 
\includegraphics[width=1\columnwidth]{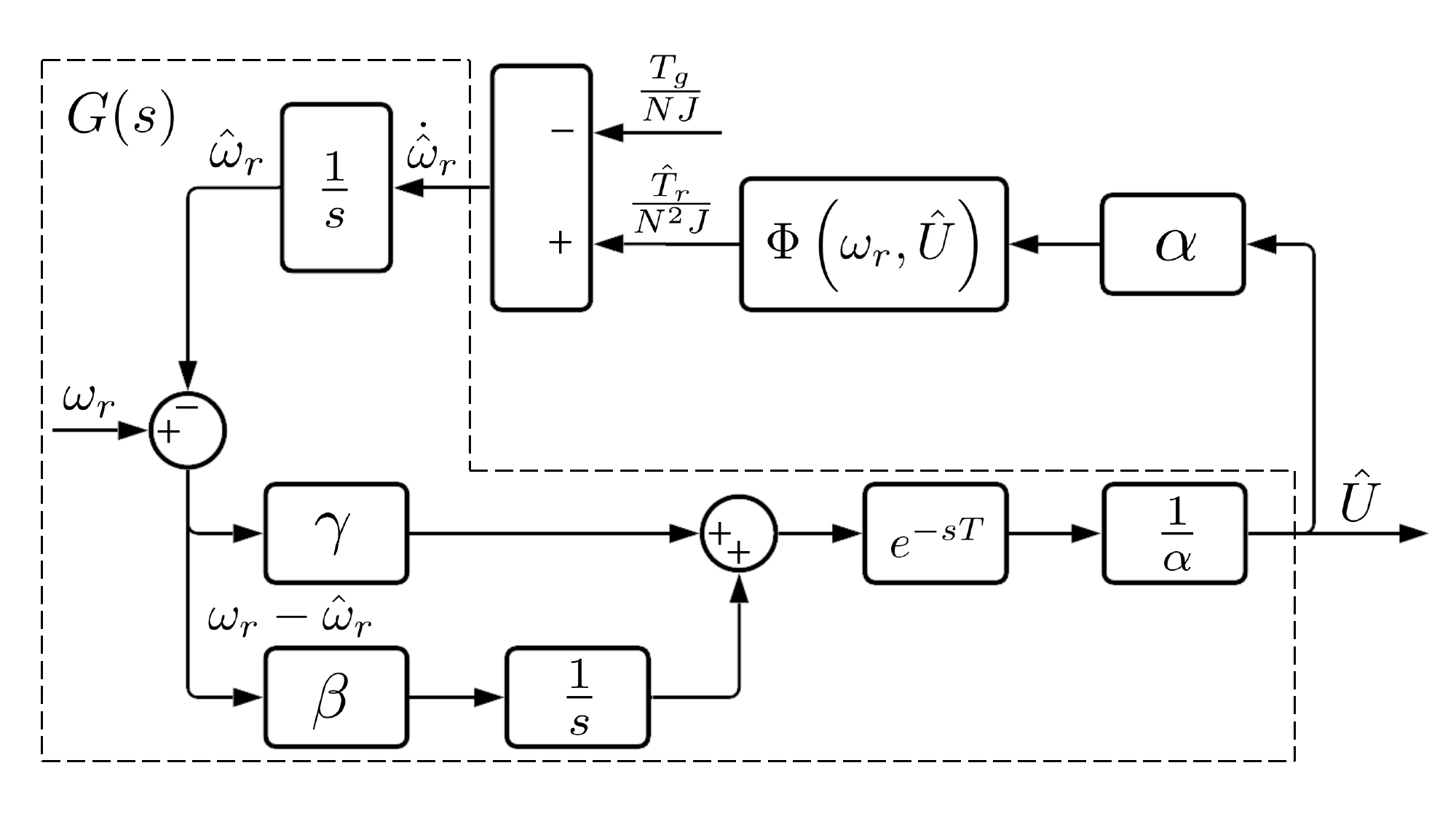}
\caption{Block diagram of the wind speed estimator in~\eqref{eq:proportional+integral} for the time-delayed system, where $G(s)$ is a minimum realization of the proposed wind speed estimator with a proportional integral correction term. $e^{-sT}$ denotes the time-delay. $\alpha$ represents the scaling factor to normalize $G(s)$.}
\label{integral block} %
\end{figure}


\begin{prop} 
The nonlinearity $\Phi(\omega_r, U)$ defined in~\eqref{eq:Phi} increases monotonically with respect to the input argument $U$, if either of the following conditions holds.

\textbf{Condition 1}. The power coefficient satisfies 
\begin{equation}
\frac{3}{\lambda}C_p(\lambda)>C_p^{'} (\lambda)
\, ,
\label{eq:c1}
\end{equation}
for $\lambda \in (\lambda_0, \lambda^\star]$, where $\lambda_0 \in (\lambda_\text{min}, \lambda^\star)$.

\textbf{Condition 2}. $\omega_r$ satisfies
\begin{equation}
\omega_r >\omega_r^{\lambda_0}
\, ,
\label{eq:c2}
\end{equation}
with the definition of $\omega_r^{\lambda_0}$
\begin{equation}
\omega_r^{\lambda_0} := \frac{\lambda_0 U}{R}
\, .
\label{eq:omega_lambda0}
\end{equation}

\end{prop}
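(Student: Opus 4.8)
The plan is to treat $\Phi$ as a function of $U$ with $\omega_r$ held fixed and to show that $\partial\Phi/\partial U>0$ under each stated condition. First I would differentiate $\Phi=\frac{\rho A}{2NJ}\frac{U^3}{\omega_r}C_p(\lambda)$ from \eqref{eq:Phi} by the chain rule, using the fact that $\lambda=\omega_r R/U$ from \eqref{eq:lambda} depends on $U$, so that $\partial\lambda/\partial U=-\lambda/U$. After collecting the $3U^2C_p$ term and the $U^3C_p'\,(\partial\lambda/\partial U)$ term, I expect the derivative to collapse to
\[
\frac{\partial \Phi}{\partial U} = \frac{\rho A\, U^2}{2NJ\,\omega_r}\bigl[\,3C_p(\lambda) - \lambda C_p'(\lambda)\,\bigr],
\]
where the prefactor is strictly positive because $\rho,A,N,J>0$, $U>0$ by Assumption~\ref{ass:constant_U}, and $\omega_r>0$ by Assumption~\ref{ass:rotor_lower_bound}.

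Since only the bracketed factor can change sign, monotonicity in $U$ is equivalent to $3C_p(\lambda)-\lambda C_p'(\lambda)>0$. Dividing through by $\lambda>0$ reproduces exactly $\tfrac{3}{\lambda}C_p(\lambda)>C_p'(\lambda)$, which is \textbf{Condition 1}; this settles the first case with no further work. For \textbf{Condition 2} I would first translate the hypothesis on the rotor speed into one on the tip speed ratio: by \eqref{eq:omega_lambda0}, $\omega_r>\omega_r^{\lambda_0}=\lambda_0 U/R$ is equivalent to $\lambda=\omega_r R/U>\lambda_0$. I then split the admissible range $\lambda\in(\lambda_0,\lambda_{\max}]$ into two pieces. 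On $(\lambda^\star,\lambda_{\max}]$, Assumption~\ref{ass:cp_smooth} gives $C_p'(\lambda)<0$ while $C_p(\lambda)>0$ by \eqref{eq:C_p}, so $\tfrac{3}{\lambda}C_p(\lambda)>0>C_p'(\lambda)$ holds automatically (and at $\lambda=\lambda^\star$, where $C_p'=0$, the strict inequality $\tfrac{3}{\lambda}C_p>0$ persists). On the remaining interval $(\lambda_0,\lambda^\star]$ the needed inequality is precisely Condition~1. Hence the bracketed factor is positive for every $\lambda>\lambda_0$, which yields monotonicity.

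The only genuine obstacle I foresee is sign bookkeeping in the derivative: the term $\partial\lambda/\partial U=-\lambda/U$ must be carried with the correct sign so that the $U^3C_p'$ contribution becomes $-\lambda C_p'$ and not $+\lambda C_p'$, since a slip here would invert the resulting inequality and misidentify the monotone region. The second, more conceptual point to verify carefully is that Condition~2 is not a separate assumption on the shape of the curve but rather the operating-region restatement of Condition~1: because positivity of the bracket is automatic on the falling branch $(\lambda^\star,\lambda_{\max}]$, the only place the inequality can fail is the rising branch $\lambda<\lambda^\star$, and requiring $\lambda>\lambda_0$ (equivalently $\omega_r>\omega_r^{\lambda_0}$) simply excludes the low-$\lambda$ sub-branch $[\lambda_{\min},\lambda_0]$ on which it might be violated.
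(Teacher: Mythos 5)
Your proof is correct and follows essentially the same route as the paper: differentiate $\Phi$ with respect to $U$ via the chain rule through $\lambda=\omega_r R/U$, reduce monotonicity to positivity of the factor $3C_p(\lambda)-\lambda C_p'(\lambda)$ (the paper's equivalent form is $\kappa(\lambda)=\tfrac{3}{\lambda}C_p(\lambda)-C_p'(\lambda)$ with a positive prefactor), observe that this is automatic on $[\lambda^\star,\lambda_{\max}]$ by Assumption~\ref{ass:cp_smooth}, and translate Condition~2 into $\lambda>\lambda_0$. Your closing remark that Condition~2 is just the operating-region restatement of Condition~1 matches the paper's continuity argument for the existence of $\lambda_0$.
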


\begin{proof}
The first order derivative of~\eqref{eq:Phi} is:

\begin{equation}
\Phi'(\omega_r,U)=\frac{\partial\Phi(\omega_r,U)}{\partial U} \\
= \frac{\rho ARU}{2NJ}(\frac{3}{\lambda}C_p(\lambda)-C_p'(\lambda))
\, ,
\label{eq:derivative}
\end{equation}
with the definition of the term $\kappa(\lambda)$

\begin{equation}
\kappa(\lambda) := \frac{3}{\lambda}C_p(\lambda)-C_p'(\lambda)
\, ,
\label{eq:kappa}
\end{equation}

The monotonicity of $\Phi(\omega_r, U)$ is based on the sign of the term $\kappa(\lambda)$, as $U$ is positive as assumed in Assumption~\ref{ass:constant_U}.
If $\kappa(\lambda)>0$, it can thus be proved that $\Phi(\omega_r, U)$ monotonically increases.


Let us prove there exists $\lambda_0<\lambda^\star$, such that for $\lambda \in (\lambda_0, \lambda_{\text{max}}]$, $\kappa(\lambda)>0$.
According to~\eqref{eq:assumption2}, it is concluded that $\kappa(\lambda)>0$ for all $\lambda \in [\lambda^\star, \lambda_{\text{max}}]$. 
This lies in the fact that $\frac{3}{\lambda}C_p(\lambda)>0$ for all $\lambda$ and $C_p'(\lambda)<0$ for $\lambda \ge\lambda^\star$.
On the other hand, considering the continuity and $C_p'(\lambda^\star)=0$, there exists $\lambda_0<\lambda^\star$ such that $\kappa(\lambda)>0$ for $\lambda \in (\lambda_0, \lambda^\star]$.
By combining both arguments together, the first condition is proved. 
In the condition of~\eqref{eq:c2}, $\lambda>\lambda_0$ is satisfied according to~\eqref{eq:lambda} and~\eqref{eq:omega_lambda0}, which completes the proof of the second condition.
\end{proof}
For the 5MW wind turbine model considered in this letter, $\kappa(\lambda)$, $\frac{3}{\lambda}C_p(\lambda)$ and $C_p'(\lambda)$ with respect to $\lambda$ are graphically illustrated in Fig.~\ref{kappa}, where $\lambda_\text{min}=2$ and $\lambda_\text{min}=10$, respectively.
\begin{figure}
\centering 
\includegraphics[width=0.8\columnwidth]{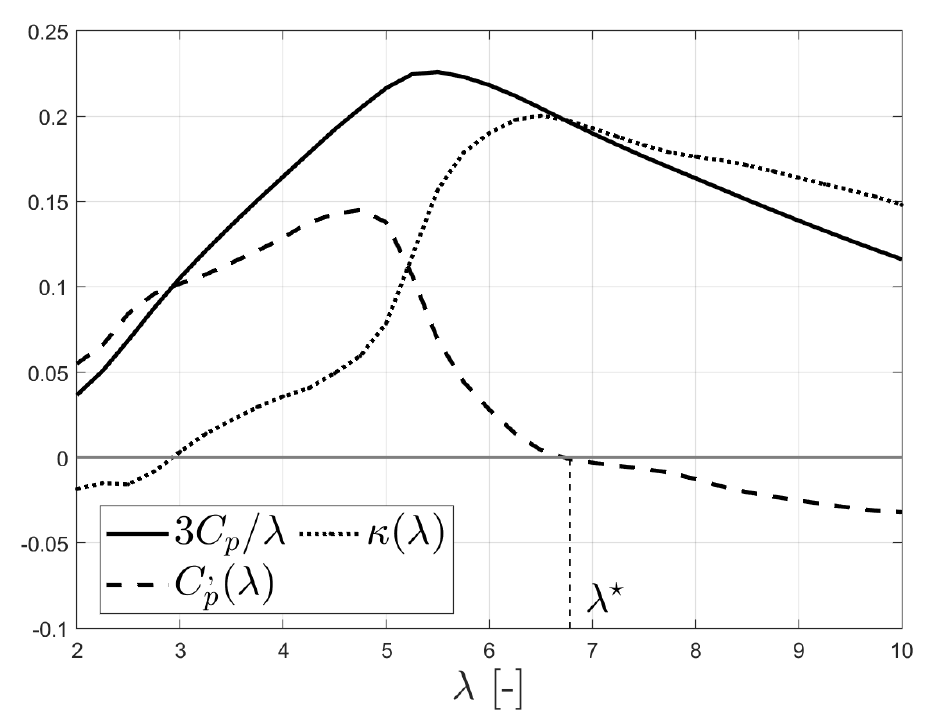}
\caption{Graphic illustration of $\kappa(\lambda)$, $\frac{3}{\lambda}C_p(\lambda)$ and $C_p'(\lambda)$ with respect to $\lambda$ for the 5MW wind turbine.}
\label{kappa} %
\end{figure}

\begin{prop} 
The nonlinearity $\Phi(\omega_r, U)$ defined in~\eqref{eq:Phi} with respect to $U$ satisfies the sector-bounded condition of the circle criterion in Theorem~1, that is, 
\begin{equation}
k_1 U \le \Phi(\omega_r, U) \le k_2 U
\, ,
\label{eq:setor bound}
\end{equation}
with the assumption that $0<k_1<k_2$.
\end{prop}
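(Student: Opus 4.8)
The plan is to recognize that, for $U>0$, the sector condition of the proposition is simply a pair of two-sided bounds on the secant ratio $\Phi(\omega_r, U)/U$, so the whole task reduces to showing that this ratio stays between two strictly positive constants over the turbine's operating region. First I would divide the definition~\eqref{eq:Phi} by $U$ to get $\Phi(\omega_r, U)/U = \tfrac{\rho A}{2NJ}\,\tfrac{U^2}{\omega_r}\,C_p(\lambda)$, and then remove the explicit $U$ by substituting $U = \omega_r R/\lambda$ from~\eqref{eq:lambda}. Since $U^2/\omega_r = \omega_r R^2/\lambda^2$, this collapses to the compact form $\Phi(\omega_r, U)/U = \tfrac{\rho A R^2}{2NJ}\,\omega_r\,\tfrac{C_p(\lambda)}{\lambda^2}$, in which all the remaining dependence enters through the measured rotor speed $\omega_r$ and through the tip speed ratio $\lambda\in[\lambda_{\text{min}},\lambda_{\text{max}}]$.

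Next I would bound the two variable factors independently. The rotor speed is bounded below by $\omega_r^{\text{min}}$ via Assumption~\ref{ass:rotor_lower_bound}, and above by the physical operating envelope of the machine, say $\omega_r^{\text{max}}$. For the second factor, Assumption~\ref{ass:cp_smooth} guarantees that $C_p$ is continuous and strictly positive on the compact interval $[\lambda_{\text{min}},\lambda_{\text{max}}]$, on which $\lambda$ is moreover bounded away from zero; hence by the extreme value theorem $C_p(\lambda)/\lambda^2$ attains a strictly positive minimum $m$ and a finite maximum $M$. Positivity of the whole ratio (and thus $0<k_1$) follows from $C_p>0$ in~\eqref{eq:C_p} together with $\omega_r,U>0$.

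Combining the bounds, I would set $k_1 := \tfrac{\rho A R^2}{2NJ}\,\omega_r^{\text{min}}\,m$ and $k_2 := \tfrac{\rho A R^2}{2NJ}\,\omega_r^{\text{max}}\,M$; both are strictly positive and $0<k_1<k_2$, and by construction $k_1 \le \Phi(\omega_r, U)/U \le k_2$, which is exactly the claimed sector bound after multiplying through by $U>0$. The monotonicity shown in the preceding proposition, namely $\kappa(\lambda)>0$ in~\eqref{eq:derivative}--\eqref{eq:kappa}, is a useful consistency check here, since it confirms that $\Phi$ is increasing across the region, supporting that the secant ratio $\Phi(\omega_r,U)/U$ stays within the sector.

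The main obstacle I anticipate is that the bound is inherently \emph{regional} rather than global: as $U\to0$ or $U\to\infty$ the tip speed ratio leaves $[\lambda_{\text{min}},\lambda_{\text{max}}]$, so the argument is only valid while the turbine stays in its design region, and this restriction must be stated explicitly. A second, related delicacy is that Assumption~\ref{ass:rotor_lower_bound} supplies only the lower bound on $\omega_r$, so the upper bound $\omega_r^{\text{max}}$ must be imported from the physical operating range; alternatively the scaling factor $\alpha$ of Fig.~\ref{integral block} can be chosen to absorb the constant gain $\tfrac{\rho A R^2}{2NJ}$ together with the $\omega_r$ factor, normalizing the loop so that the circle criterion of Theorem~\ref{thrm:circle_criterion} can be applied cleanly. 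Pinning down this normalization is where the real care lies.
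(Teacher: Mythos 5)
Your proposal is correct and follows essentially the same route as the paper: bound the secant ratio $\Phi(\omega_r,U)/U = \tfrac{\rho A}{2NJ}\tfrac{U^2}{\omega_r}C_p(\lambda)$ between two positive constants using positivity of $C_p$ and the lower bound on $\omega_r$, then multiply through by $U>0$. Your version is in fact more explicit than the paper's proof, which essentially just asserts the existence of $k_1$ and $k_2$; you construct them via the extreme value theorem on the compact interval $[\lambda_{\text{min}},\lambda_{\text{max}}]$ and correctly flag that an upper bound on $\omega_r$ (equivalently, confinement of $\lambda$ to its operating range) is also needed and is not supplied by Assumption~1 alone.
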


\begin{proof}
First, note the fact that
\begin{equation}
\frac{\rho A}{2NJ}>0 
\, ,
\end{equation}
with the conditions in~\eqref{eq:C_p} and~\eqref{eq:assumption1}, we have
\begin{equation}
k_1 \le \frac{\rho A}{2NJ}\frac{U^2}{\omega_r}C_p(\omega_r, U) \le k_2
\, .
\label{eq:setor bound proof step1}
\end{equation}
As $U$ satisfies Assumption~3, we conclude that
\begin{equation}
k_1 U \le \frac{\rho A}{2NJ}\frac{U^3}{\omega_r}C_p(\omega_r, U) \le k_2 U
\, ,
\label{eq:setor bound proof step2}
\end{equation}
with the definition of $\Phi(\omega_r, U)$ in~\eqref{eq:Phi}, completing the proof.
\end{proof}
Now we are in position to illustrate the main results.

\begin{theorem}
Consider the wind speed estimator with a PI correction term in~\eqref{eq:proportional+integral} and let us assume that either of Condition~1 and Condition~2 holds.
A sufficient condition for the global convergence of this wind speed estimator is that the Nyquist curve of the linear system with time-delays $G(s)$ does not enter the circle with a radius of $\frac{k_2-k_1}{2k_1k_2}$ and center at $\frac{-k_2-k_1}{2k_1k_2}$.
\end{theorem}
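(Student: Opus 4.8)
The plan is to recast the estimation-error dynamics of~\eqref{eq:proportional+integral} as the Lur'e-type negative-feedback interconnection drawn in Fig.~\ref{integral block} and then to invoke the circle criterion, Theorem~\ref{thrm:circle_criterion}. First I would subtract the estimator equation from the physical generator dynamics. Writing the true rotor dynamics obtained from~\eqref{eq:Jwg} and~\eqref{eq:Phi} as $\dot{\omega}_r=\frac{1}{N}\Phi(\omega_r,U)-\frac{T_g}{NJ}$ and comparing it with the first line of~\eqref{eq:proportional+integral}, the rotor-speed error $\epsilon=\omega_r-\hat{\omega}_r$ obeys $N\dot{\epsilon}=\Phi(\omega_r,U)-\Phi(\omega_r,\hat{U})$, where the measured generator torque $T_g$ cancels identically. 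Thus the only signal driving the error loop is the mismatch of the nonlinearity evaluated at the true and at the estimated wind speed.

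Next I would separate the loop into its linear and nonlinear parts. Integrating $\dot{\epsilon}$ contributes a factor $1/s$; cascading it with the PI correction $\hat{U}=\gamma\epsilon+\beta\int_0^t\epsilon(\tau)\,d\tau$, i.e. $(\gamma s+\beta)/s$, and with the sampling delay $e^{-sT}$ reproduces exactly the forward block $G(s)=\frac{\gamma s+\beta}{s^2}e^{-sT}$ of~\eqref{eq:G1}, while the feedback element is the (time-varying, through the measured $\omega_r$) static map $\hat{U}\mapsto\Phi(\omega_r,\hat{U})$. The scaling factor $\alpha$ in Fig.~\ref{integral block} serves only to normalise the loop gain so that the nonlinearity is presented in the standard sector form demanded by Theorem~\ref{thrm:circle_criterion}.

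The core of the argument is to check the sector hypothesis for the nonlinearity that truly appears in the error loop. Because $U$ is an unknown constant by Assumption~\ref{ass:constant_U}, the object of interest is the incremental map $\psi(\hat{U}):=\Phi(\omega_r,U)-\Phi(\omega_r,\hat{U})$, which vanishes precisely at $\hat{U}=U$. Using Proposition~2 (monotonicity of $\Phi$ in its second argument whenever Condition~1 or Condition~2 holds) together with the mean-value theorem, I would write $\psi(\hat{U})=\Phi'(\omega_r,\xi)\,(U-\hat{U})$ for some $\xi$ lying between $\hat{U}$ and $U$, where $\Phi'$ is the positive quantity computed in~\eqref{eq:derivative}; positivity makes $\hat{U}=U$ the unique equilibrium. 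The magnitude of the slope is then bracketed using Proposition~3, so the shifted nonlinearity lies in the sector $[k_1,k_2]$ with $0<k_1<k_2$, which is exactly the sector premise required by the circle criterion.

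With the interconnection in standard form, I would finally apply Theorem~\ref{thrm:circle_criterion} verbatim: its conclusion of closed-loop stability holds as soon as the Nyquist locus of $G(s)$ stays outside the disc of diameter $-1/k_1\le U\le-1/k_2$ and the encirclement count agrees with the open-loop poles of $G(s)$, which is precisely the sufficient condition asserted in the statement. Absolute stability of the error interconnection, reinforced by the integral action that drives $\epsilon\to0$, then gives $\hat{\omega}_r\to\omega_r$ and $\hat{U}\to U$, i.e.~\eqref{eq:estimate}, and hence global convergence in the presence of the delay $e^{-sT}$. The step I expect to be the main obstacle is the sector verification: Proposition~3 bounds $\Phi(\omega_r,U)$ as a ray through the origin, whereas the circle criterion acts on the incremental nonlinearity about the operating point $U$, so one must argue that the incremental slope $\Phi'(\omega_r,\xi)$ stays in $[k_1,k_2]$ uniformly along the whole trajectory and uniformly in the time-varying $\omega_r$ --- invoking the lower bound $\omega_r^{\min}$ of Assumption~\ref{ass:rotor_lower_bound} and the sign of $\kappa(\lambda)$ from~\eqref{eq:kappa} to control the admissible $\lambda$ range. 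A secondary difficulty is accounting for the delay's effect on the encirclement test, since the double integrator in $G(s)$ places poles on the imaginary axis.
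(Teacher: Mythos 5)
Your proposal follows essentially the same route as the paper: the paper's entire proof is a one-line appeal to Theorem~\ref{thrm:circle_criterion} together with Propositions~2 and~3, and your Lur'e decomposition of the error dynamics, the identification of the forward path with $G(s)=\frac{\gamma s+\beta}{s^2}e^{-sT}$, and the final invocation of the circle criterion are exactly what that one-liner implicitly relies on. Your derivation $N\dot{\epsilon}=\Phi(\omega_r,U)-\Phi(\omega_r,\hat{U})$ is correct and is the step the paper leaves unstated.

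What you add --- and what deserves emphasis --- is the obstacle you flag at the end, which is a genuine gap in the paper's own argument rather than in yours. Proposition~3 only establishes that $\Phi(\omega_r,\cdot)$ lies in the sector $[k_1,k_2]$ \emph{through the origin}, whereas the nonlinearity that actually appears in the error loop is the incremental map about the operating point $U$. A sector bound at the origin does not imply an incremental sector bound; what is needed is the slope restriction $k_1\le\Phi'(\omega_r,\xi)\le k_2$ uniformly in $\xi$ between $\hat{U}$ and $U$ and uniformly in the admissible $\omega_r$, which is strictly stronger than the statement of Proposition~3 (Proposition~2 supplies only positivity of the slope, not the bounds). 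Your mean-value-theorem repair is the right way to close this, but it requires upgrading Proposition~3 to a derivative bound, using Assumption~\ref{ass:rotor_lower_bound} and the behaviour of $\kappa(\lambda)$ to confine $\lambda$ to a compact set on which $\Phi'$ is bounded away from $0$ and from above. Your secondary concern about the encirclement count is also well placed: with the double integrator in $G(s)$ the Nyquist contour must be indented around the origin, and the non-encirclement condition has to be checked explicitly rather than being subsumed by the ``does not enter the circle'' clause; the paper does not address this either.
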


\begin{proof}
The proof of this theorem is readily derived from the circle criterion in Theorem~1 and Proposition~2 and 3.
\end{proof}

The distance criterion can be used to formulate the same sufficient stability conditions as in Theorem~2.

\begin{corollary}

Let us consider the wind speed estimator with a PI correction in~\eqref{eq:proportional+integral} and let us assume that either of Condition~1 and Condition~2 holds.
If the following inequality is satisfied

\begin{equation}
\left| G(j\omega)-C\right| > R \hspace{2mm}\forall \hspace{2mm} \omega
\, ,
\label{eq:distance criterion}
\end{equation}
then the Nyquist curve of $G(s)$ does not enter the circle  with a radius of $\frac{k_2-k_1}{2k_1k_2}$ and center at $\frac{-k_2-k_1}{2k_1k_2}$ and, thus, the wind speed estimator is globally convergent. 

In equation \eqref{eq:distance criterion}, it holds $C=\frac{-k_2-k_1}{2k_1k_2}$ and $R=\frac{k_2-k_1}{2k_1k_2}$;
$C$ and $R$ represent the center and the radius of the circle.
In this respect, \eqref{eq:distance criterion} states that the distance between $G(j\omega)$ and $C$ should be larger than $R$ for all $\omega$.
\end{corollary}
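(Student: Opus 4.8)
The plan is to recognize that this corollary is a purely geometric reformulation of the non-entry hypothesis already present in \textbf{Theorem 2}: once the distance inequality~\eqref{eq:distance criterion} is shown to be equivalent to the Nyquist curve of $G(s)$ lying outside the forbidden disk, global convergence follows at once by invoking \textbf{Theorem 2}. No fresh stability analysis is therefore needed, and the whole argument reduces to elementary planar geometry followed by a citation of the result already established.

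First I would pin down the disk itself. Its diameter is the real segment whose endpoints are $-1/k_1$ and $-1/k_2$; since $0<k_1<k_2$ we have $-1/k_1<-1/k_2<0$, so both endpoints lie on the negative real axis and the disk sits in the left half-plane. The center $C$ is the midpoint of this segment,
\begin{equation}
C=\frac{1}{2}\left(-\frac{1}{k_1}-\frac{1}{k_2}\right)=\frac{-k_2-k_1}{2k_1k_2}\,,
\label{eq:dc_center}
\end{equation}
and the radius $R$ is half the length of the diameter,
\begin{equation}
R=\frac{1}{2}\left|\frac{1}{k_1}-\frac{1}{k_2}\right|=\frac{k_2-k_1}{2k_1k_2}\,,
\label{eq:dc_radius}
\end{equation}
where the absolute value is resolved using $k_2>k_1$. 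This reproduces exactly the expressions for $C$ and $R$ asserted in the corollary.

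Next I would translate the geometric non-entry condition into the distance inequality. A point $z$ of the complex plane lies strictly outside the closed disk of center $C$ and radius $R$ precisely when $|z-C|>R$. Applying this pointwise to every value $z=G(s)$ taken along the Nyquist contour, the curve fails to enter the disk if and only if $|G(s)-C|>R$ holds along the whole contour, that is, if and only if $\min_{s}|G(s)-C|>R$, which is exactly~\eqref{eq:distance criterion}. Combining this equivalence with \textbf{Theorem 2}, whose sufficient condition for global convergence is precisely that the Nyquist curve avoids this disk, completes the argument.

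The difficulty here is not in the geometry but in the bookkeeping around the extreme value and the strict inequality. Because $G(s)$ in~\eqref{eq:G1} has a double pole at the origin, the Nyquist locus is unbounded, so some care is needed to ensure that $\min_{s}|G(s)-C|$ is genuinely attained on the finite-frequency portion of the contour (or else should be read as an infimum), and to confirm that the strict sign $>$ in~\eqref{eq:distance criterion} indeed places every point strictly off the closed boundary, matching the open-sector interpretation of the circle criterion. A secondary point worth stating explicitly is that \textbf{Theorem 2} already absorbs the encirclement requirement of the general Nyquist criterion, so that the distance criterion, which supplies only the non-entry part, suffices to conclude global convergence without any re-examination of the winding number.
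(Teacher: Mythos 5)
Your proof is correct and follows essentially the same reasoning the paper relies on: the paper gives no separate proof for this corollary, treating it as an immediate geometric restatement of the non-entry hypothesis of \textbf{Theorem 2}, which is exactly what you establish by identifying $C$ and $R$ as the disk's center and radius and applying the pointwise distance characterization. Your additional caveats about the unbounded Nyquist locus near the double pole and the encirclement condition are sensible refinements but do not change the substance of the argument.
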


\begin{rem}
The scaling factor $\alpha$ is determined by placing the circle center $C$ to $(-1,0)$ in the Nyquist diagram.
Thus, $\alpha$ is given by
\begin{equation}
\alpha = \frac{k_2+k_1}{2k_1 k_2} 
\, .
\label{eq:alpha}
\end{equation}
\end{rem}

Based on the circle criterion, the global convergence of the proposed estimator with a PI correction term is proved, which opens up the inclusion of time-delays and/or uncertainties.

\section{Case study}\label{sec:5}
The main objective of this section is twofold. 
First, we demonstrate that the circle criterion is an alternative method to analyze its global convergence. 
In the case study, different PI gains and time-delays are considered for discussions.
Second, the performance of the proposed wind speed estimator with a PI correction term is illustrated by means of a comparison to the original I\&I estimator.

The wind turbine dynamics is simulated using NREL's Fatigue, Aerodynamics, Structures, and Turbulence (FAST) tool~\cite{Jonkman-2005}.
As the wind speed estimator is employed under closed-loop interconnection with control systems, the classical $K\omega_g^2$ torque controller~\cite{Bossanyi_2000}, with $K$ being a predetermined optimal gain, is implemented.
In order to evaluate the performance of the estimators under non-constant wind speed, stepwise wind from 5~m/s to 9~m/s with 2~m/s step size is considered.
Based on the model setup, the effectiveness of the wind speed estimator is demonstrated in the simulations.

First, the performance of the proposed wind speed estimator with a PI correction term and the original I\&I estimator is illustrated in Fig.~\ref{Model compare} for comparisons.
The proportional gain $\gamma$ and the delayed time $T$ are set to $80$ and $0.1$s, respectively, for both of them, while the integral gain $\beta$ is selected as $4$ for the proposed estimator.

\begin{figure}
\centering 
\includegraphics[width=0.85\columnwidth]{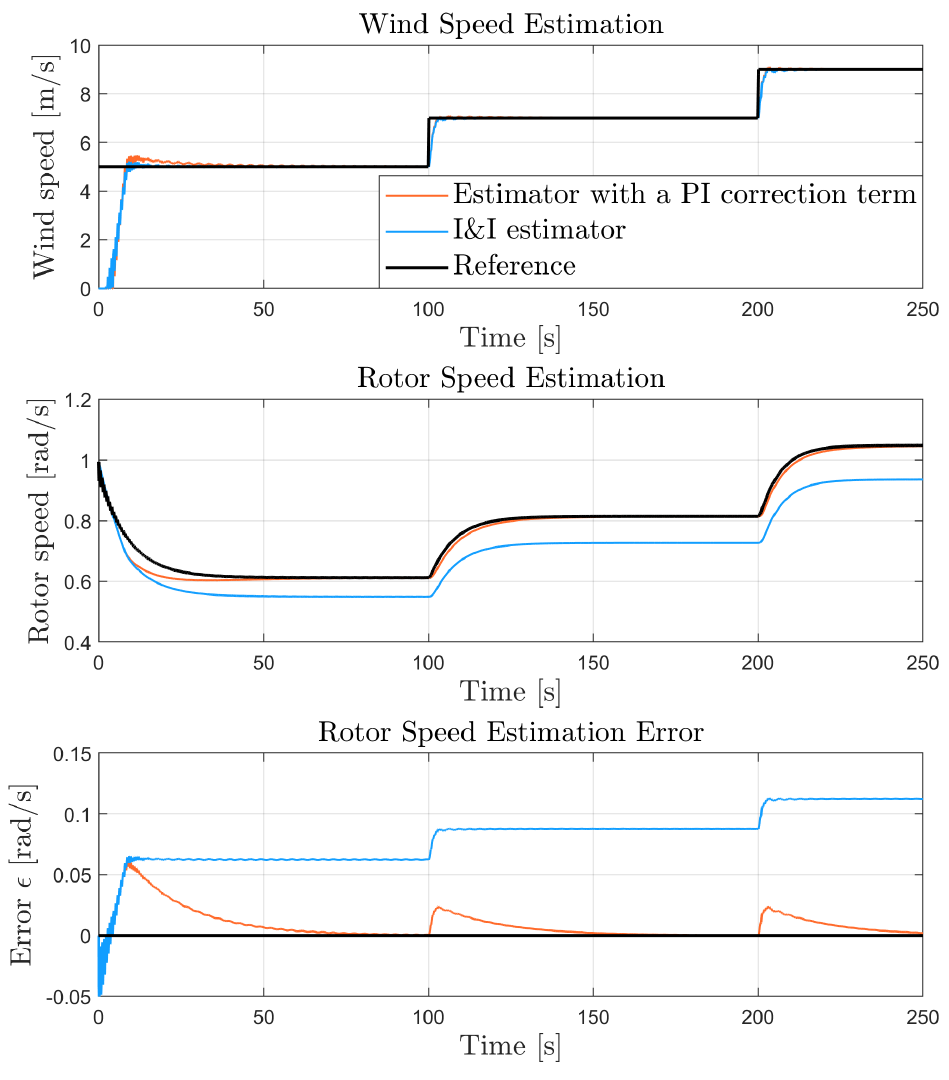}
\caption{Comparisons between the proposed wind speed estimator with a PI correction term and the original I\&I estimator.}
\label{Model compare} %
\end{figure}
In comparison, the proposed estimator with PI correction term in general shows similar results as the original I\&I estimator for the considered uniform wind speed conditions.
By adding the integrator into the correction term, it is clear that the steady-state error on the rotor speed estimation is reduced to zero, thus leading to an unbiased estimate of the rotor speed, \emph{i.e.}, $\hat{\omega}_r$.
Therefore, not only the wind speed, but also the rotor speed can be estimated by the proposed estimator with an additional integrator correction term.

Second, the effect of the proportional and integral gains on the global convergence of the estimator is investigated in the following three cases:
Case~1: $\gamma=40$, $\beta=10$, $T=0.1$s;
Case~2: $\gamma=100$, $\beta=10$, $T=0.1$s;
Case~3: $\gamma=100$, $\beta=200$, $T=0.1$s.
The simulation results are depicted in Fig.~\ref{gain_compare}. 
As anticipated, larger gain values speed up the convergence of the estimator, whereas it may induce the instability of the system.
The wind speed estimator shows higher overshoots at the transition of different wind speed steps in Case 3.
In particular, significant oscillations are observed in Case 3 at the wind speed of 5m/s.
It is clear from Fig.~\ref{gain_nyquist} that the Nyquist curve of $G(s)$ in Cases 1-2 does not enter the circle, which indicates its global convergence ability.
However, for the third case the stability is not guaranteed as indicated by the circle criterion. 
The sufficient stability conditions can be derived according to the proposed distance criterion in~\eqref{eq:distance criterion}.
In the case where $\gamma=100$ and $T=0.1$s, $\beta<59.5$ can be selected to satisfy the distance criterion, and thus achieve the globally convergent wind speed estimation.

\begin{figure}
\centering 
\includegraphics[width=0.9\columnwidth]{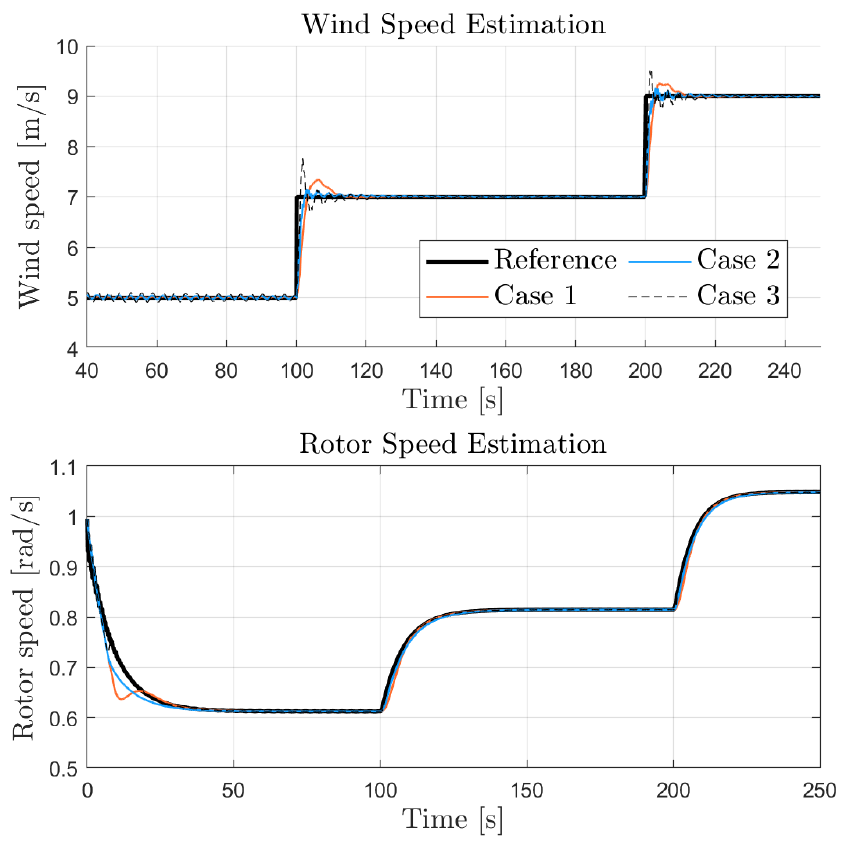}
\caption{Comparisons of the wind speed and rotor speed estimations between different gains under the stepwise wind condition.}
\label{gain_compare} %
\end{figure}

\begin{figure}
\centering 
\includegraphics[width=0.9\columnwidth]{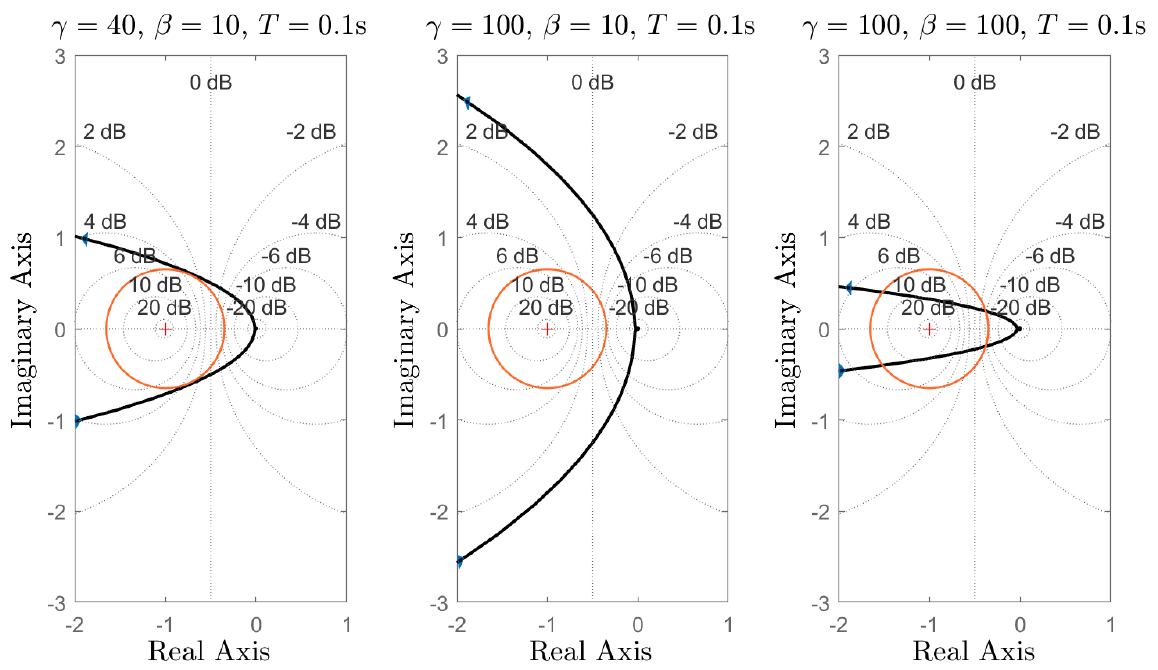}
\caption{Nyquist plot of $G(s)$ with different gains and the circle defined in the circle criterion.}
\label{gain_nyquist} %
\end{figure}

Third, the global convergence of the estimator for different time-delayed systems is analyzed as below:
Case~4: $\gamma=40$, $\beta=10$, $T=0.1$s;
Case~5: $\gamma=40$, $\beta=10$, $T=0.6$s;
Case~6: $\gamma=40$, $\beta=10$, $T=2$s.
As seen in Fig.~\ref{delay_compare}, a larger time-delay, \emph{i.e.}, $2$s in Case 6, induces significant oscillations of the estimator.
The amplitude of the oscillations seems to grow unboundedly.
This is actually anticipated as the circle criterion presented in Fig.~\ref{delay_nyquist} implies that 
the estimator is unstable for such a time-delay.
To achieve the global convergence for Case 6, gains or time-delays should be reduced to satisfy the circle criterion design.
For instance, a lower time-delay, \emph{i.e.}, $T<0.174$s, can be selected for Case 6 to achieve its global convergence according to~\eqref{eq:distance criterion}.

\begin{figure}
\centering 
\includegraphics[width=0.9\columnwidth]{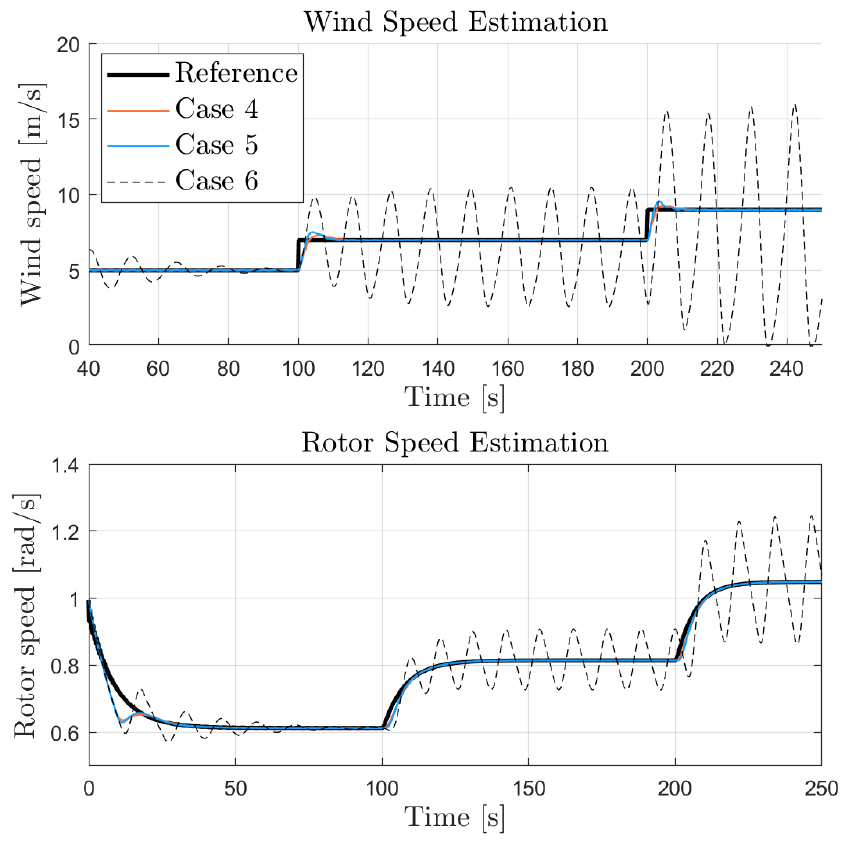}
\caption{Comparisons of the estimator with different time-delays under the stepwise wind condition. }
\label{delay_compare} %
\end{figure}

\begin{figure}
\centering 
\includegraphics[width=0.9\columnwidth]{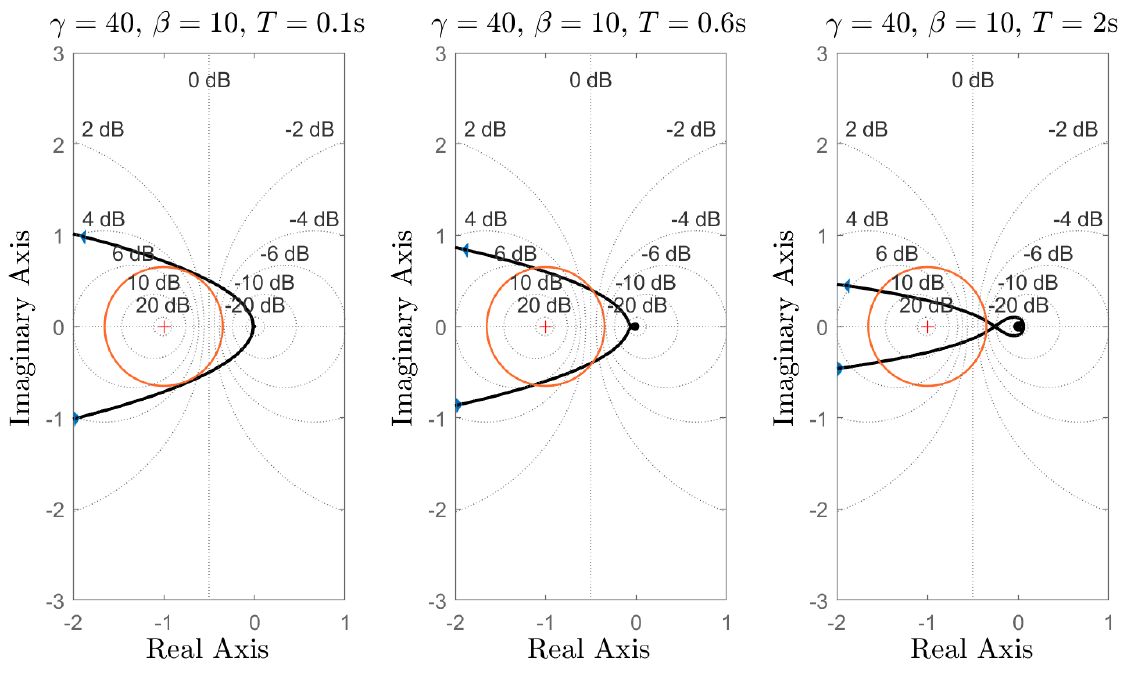}
\caption{Nyquist plot of $G(s)$ with different time-delays and the circle defined in the circle criterion.}
\label{delay_nyquist} %
\end{figure}

In conclusion, the circle criterion is an effective alternative method to achieve the global convergence of the estimator for time-delayed systems.
In case the sufficient stability condition is satisfied, the wind speed estimator with a PI correction term shows good performance on both wind speed and rotor speed estimations.

\vspace{-0.2cm}

\section{Conclusions}\label{sec:6}
In this letter, we illustrate that the circle criterion can be used as an alternative global convergence proof to the Immersion and Invariance (I\&I) estimator which also opens up the inclusion of time-delays and uncertainties. 
In detail, we show that the I\&I estimator can be rewritten as an estimator with a proportional correction term.
By looking at the theoretical framework from a new perspective, we propose to include an additional integrator to the correction term to improve the estimator performance.
The nonlinearity in the estimator appears as a power coefficient and can be sector bounded. This allows to use
the circle criterion to prove the global convergence of the estimator.
Case studies exhibit that the proposed wind speed estimator with a proportional integral correction term shows good performance on wind speed estimation. 
As the steady-state error on the rotor speed estimation is eliminated by the additional integrator correction term, the proposed estimator can also provide an accurate estimate of the rotor speed.
Furthermore, simulation results demonstrate that the circle criterion is an alternative method to prove its global convergence.
Based on this, the effects of the gains and time-delays on the global convergence of the estimator are analysed.


\vspace{-0.2cm}

\bibliographystyle{IEEEtran} 
\bibliography{references}

\begin{thebibliography}{10}
\providecommand{\url}[1]{#1}
\csname url@samestyle\endcsname
\providecommand{\newblock}{\relax}
\providecommand{\bibinfo}[2]{#2}
\providecommand{\BIBentrySTDinterwordspacing}{\spaceskip=0pt\relax}
\providecommand{\BIBentryALTinterwordstretchfactor}{4}
\providecommand{\BIBentryALTinterwordspacing}{\spaceskip=\fontdimen2\font plus
\BIBentryALTinterwordstretchfactor\fontdimen3\font minus
  \fontdimen4\font\relax}
\providecommand{\BIBforeignlanguage}[2]{{%
\expandafter\ifx\csname l@#1\endcsname\relax
\typeout{** WARNING: IEEEtran.bst: No hyphenation pattern has been}%
\typeout{** loaded for the language `#1'. Using the pattern for}%
\typeout{** the default language instead.}%
\else
\language=\csname l@#1\endcsname
\fi
#2}}
\providecommand{\BIBdecl}{\relax}
\BIBdecl

\bibitem{GWEC_2020}
J.~Lee and F.~Zhao, ``Global wind statistics 2020,'' Global wind energy
  council, Report, 2020.

\bibitem{Ostergaard_2007}
K.~Z. {\O}stergaard, P.~Brath, and J.~Stoustrup, ``Estimation of effective wind
  speed,'' \emph{Journal of Physics: Conference Series}, vol.~75, p. 012082,
  jul 2007.

\bibitem{DOEKEMEIJER2020719}
B.~M. Doekemeijer, D.~{van der Hoek}, and J.~W. {van Wingerden}, ``Closed-loop
  model-based wind farm control using {FLORIS} under time-varying inflow
  conditions,'' \emph{Renewable Energy}, vol. 156, pp. 719--730, 2020.

\bibitem{Liu2021}
Y.~{Liu}, A.~{Kusumo Pamososuryo}, R.~{Ferrari}, T.~{Gybel Hovgaard}, and J.~W.
  {van Wingerden}, ``{Blade Effective Wind Speed Estimation: A Subspace
  Predictive Repetitive Estimator Approach},'' in \emph{2021 European Control
  Conference (ECC)}, 2021.

\bibitem{Soltani2013}
M.~N. Soltani, T.~Knudsen, M.~Svenstrup, R.~Wisniewski, P.~Brath, R.~Ortega,
  and K.~E. Johnson, ``{Estimation of rotor effective wind speed: A
  comparison},'' \emph{IEEE Transactions on Control Systems Technology},
  vol.~21, no.~4, pp. 1155--1167, 2013.

\bibitem{Ma_1995}
X.~Ma, N.~Poulsen, and H.~Bindner,
  \emph{\BIBforeignlanguage{English}{Estimation of Wind Speed in Connection to
  a Wind Turbine}}.\hskip 1em plus 0.5em minus 0.4em\relax Informatics and
  Mathematical Modelling, Technical University of Denmark, DTU, 1995.

\bibitem{Ortega_2013}
R.~Ortega, F.~Mancilla-David, and F.~Jaramillo, ``A globally convergent wind
  speed estimator for wind turbine systems,'' \emph{International Journal of
  Adaptive Control and Signal Processing}, vol.~27, no.~5, pp. 413--425, 2013.

\bibitem{Khalil_2015}
H.~Khalil, \emph{Nonlinear Control, Global Edition}.\hskip 1em plus 0.5em minus
  0.4em\relax Pearson Education Limited, 2015.

\bibitem{Lee_1981}
T.~{Lee} and S.~{Dianat}, ``Stability of time-delay systems,'' \emph{IEEE
  Transactions on Automatic Control}, vol.~26, no.~4, pp. 951--953, 1981.

\bibitem{Jonkman_2009}
J.~Jonkman, S.~Butterfield, W.~Musial, and G.~Scott, ``Definition of a 5{MW}
  reference wind turbine for offshore system development,'' National Renewable
  Energy Laboratory, NREL/TP-500-38060, 2009.

\bibitem{Fernando_2012}
F.~Mancilla-David and R.~Ortega, ``Adaptive passivity-based control for maximum
  power extraction of stand-alone windmill systems,'' \emph{Control Engineering
  Practice}, vol.~20, no.~2, pp. 173--181, 2012.

\bibitem{Karl_2020}
K.~J. Åström and R.~M. Murray, \emph{\BIBforeignlanguage{English}{Feedback
  Systems: An Introduction for Scientists and Engineers}}.\hskip 1em plus 0.5em
  minus 0.4em\relax Princeton University Press, 2020.

\bibitem{Jonkman-2005}
J.~M. Jonkman and M.~L. Buhl, ``{FAST} user's guide,'' \emph{SciTech Connect:
  FAST User's Guide}, 2005.

\bibitem{Bossanyi_2000}
E.~A. Bossanyi, ``The design of closed loop controllers for wind turbines,''
  \emph{Wind Energy}, vol.~3, no.~3, pp. 149--163, 2000.

\end{thebibliography}

\end{document}